\newtheorem{theorem}{Theorem}
\newtheorem{definition}{Definition}
\definecolor{BrickRed}{RGB}{178,34,34}
\newif\if@restonecol  
\begin{document}


\title{FileDES: A Secure, Scalable and Succinct Decentralized Encrypted Storage Network
\thanks{Corresponding author: Xiuzhen Cheng (\href{mailto:xzcheng@sdu.edu.cn}{xzcheng@sdu.edu.cn}).}
}

\author{
	\IEEEauthorblockN{Minghui~Xu$^{\dag\P}$, Jiahao Zhang$^{\dag}$, Hechuan Guo$^{\dag}$, Xiuzhen~Cheng$^{\dag}$, Dongxiao Yu$^{\dag}$, Qin Hu$^{\ddagger}$, Yijun Li$^{*}$, Yipu Wu$^{*}$}
	\IEEEauthorblockA{$^\dag$ School of Computer Science and Technology, Shandong University}
        \IEEEauthorblockA{$^\P$ ETH Zurich}
        \IEEEauthorblockA{$^\ddagger$ Department of Computer and Information Science, Indiana University-Purdue University Indianapolis}
        \IEEEauthorblockA{$^*$ BaishanCloud}
}

\maketitle

\begin{abstract}
Decentralized Storage Network (DSN) is an emerging technology that challenges traditional cloud-based storage systems by consolidating storage capacities from independent providers and coordinating to provide decentralized storage and retrieval services. However, current DSNs face several challenges associated with data privacy and efficiency of the proof systems. To address these issues, we propose FileDES (\uline{D}ecentralized \uline{E}ncrypted \uline{S}torage), which incorporates three essential elements: privacy preservation, scalable storage proof, and batch verification. FileDES provides encrypted data storage while maintaining data availability, with a scalable Proof of Encrypted Storage (PoES) algorithm that is resilient to Sybil and Generation attacks. Additionally, we introduce a rollup-based batch verification approach to simultaneously verify multiple files using publicly verifiable succinct proofs. We conducted a comparative evaluation on FileDES, Filecoin, Storj and Sia under various conditions, including a WAN composed of up to 120 geographically dispersed nodes. Our protocol outperforms the others in terms of proof generation/verification efficiency, storage costs, and scalability.
\end{abstract}

\begin{IEEEkeywords}
Decentralized storage network, blockchain, data sharing, proof of storage, scalability, Sybil attacks. 
\end{IEEEkeywords}

\IEEEpeerreviewmaketitle

\section{Introduction}
\label{sec:introduction}
Blockchain technology has brought about a significant innovation in distributed storage. Decentralized storage networks (DSNs) represent a novel approach that can aggregate available storage spaces from independent providers, allowing for coordinated and reliable storage and retrieval of data. Several well-known DSN projects, including Filecoin~\cite{filecoin}, Sia~\cite{sia}, Storj~\cite{storj}, and Swarm~\cite{swarm}, have demonstrated various advantages of DSNs, including storage capacity expansion, data sharing promotion, and data security enhancement. By incentivizing storage providers with cryptocurrency rewards, DSNs can achieve larger capacity than traditional distributed storage networks~\cite{kopp2017design}. The use of blockchain ensures consistency and immutability, improving the security and robustness of services provided by mutually untrusted storage providers. DSNs have proven to be a valuable building block for applications such as Web 3.0~\cite{korpal2022decentralization}. However, three open challenges \textbf{[C1-C3]}, as illustrated in Fig.~\ref{fig:challenges}, still exist, which significantly impede the performance and security of today's DSNs.

\begin{figure}[!htbp] 
	\centering 
	\includegraphics[width=0.45\textwidth]{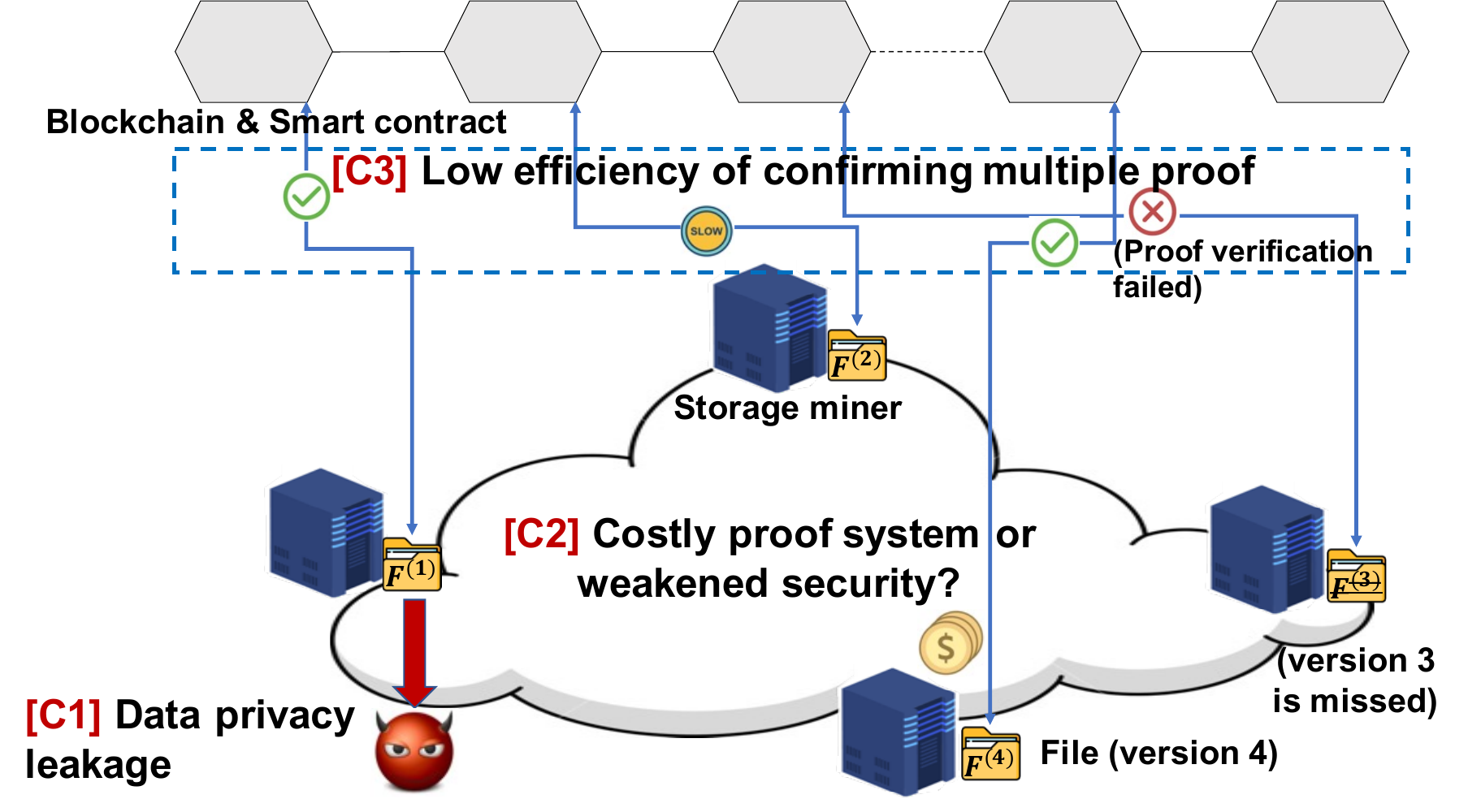} 
	\caption{Open challenges faced by today's DSNs.} 
	\label{fig:challenges} 
\end{figure}

\noindent \textbf{[C1] Data privacy leakage.} 
A fundamental goal of DSNs is to promote data sharing. Nevertheless, current solutions present two primary options for data storage, namely, plaintext and simple encrypted data. The former approach, which has been widely adopted, permits the direct storage of non-sensitive plaintext by users~\cite{psaras2020interplanetary}. However, it falls short of adequate safeguards for data privacy. The latter option, exemplified by the ChainSafe Files~\cite{chainsafe}, allows for the storage of encrypted data, which are exclusively visible only to designated users. As a result, this approach adversely affects data availability, thereby hindering data sharing. Thus, there is a pressing need to develop a DSN solution that can guarantee robust data privacy and high data availability.

\noindent \textbf{[C2] Costly or security-weakened proof system.} 
Proof of Space (PoS) and Proof of Spacetime (PoSt) are essential components of the proof system of a DSN. These protocols enable the network to validate whether a storage miner has offered legitimate storage services by issuing real-time challenges and demanding responses as the corresponding proofs~\cite{benet2017proof}. Nonetheless, generating a proof can be a time-intensive and hardware-demanding process~\cite{fisch2019tight, fisch2018poreps, ren2016proof, guo2022filedag}, or require the compromise of security for efficiency. In the Filecoin network, for instance, the proof generation (i.e., the $\mathtt{SEALING}$ process) involves a complex data structure called a Stacked Depth Robust Graph, making it computationally expensive. Thus, a minimum hardware configuration of 256GB RAM and a GPU with 11GB VRAM is required to participate in storage services~\footnote{https://lotus.filecoin.io/storage-providers/get-started/hardware-requirements/}. Other DSNs, such as Sia~\cite{sia}, Storj~\cite{storj}, and Swarm~\cite{swarm}, have simplified their proof systems to enhance efficiency at the expense of sacrificing security. Specifically, Sia and Swarm only provide proofs for partial data (256KB sector in Sia and 1MB stipe in Swarm), while Storj's proof system is based on node reputation, leading to non-negligible false positives~\cite{storj}. Overall, the efficiency and security of DSNs' proof systems are crucial to ensuring the system' viability and adoption.

\noindent \textbf{[C3] Low efficiency of recurrent proof verification.} 
In DSNs, one of the recurrent tasks is verifying the PoS and PoSt. This verification process can pose significant computational challenges, especially in the following two situations. (1) The challenger frequently challenges the storage miners to verify that they have not violated any rule in storing files. For example, the Filecoin mainnet challenges a file every 30 minutes, 48 times daily~\cite{wursten2022filecoin}, to verify enormous proofs. (2) Managing multiple versions of a file can be a complex task, requiring the verification of several versions simultaneously. This is essential to ensure that historical files of a project are not lost. Such situations are prevalent, particularly in software updates~\cite{nikitin2017chainiac} and medical records maintenance~\cite{dubovitskaya2017secure}. To expedite the frequent verification process in these applications, an efficient proof system is indispensable.
 
In response, we propose FileDES to address these challenges. Our major contributions can be summarized as follows:
\begin{itemize}
    \item FileDES is a decentralized encrypted storage network that addresses the challenges [C1-C3]. It offers unique features such as privacy preservation, scalable storage proof, and batch proof verification. In comparison with Filecoin and Sia, FileDES demonstrates superior performance under various typical conditions.
    \item To protect data privacy while maintaining data availability, we introduce an encrypted storage plan based on RSA and Unidirectional Proxy Re-Encryption (PRE). This method not only prevents Sybil and Generation attacks but also facilitates a secure storage mechanism without complicated PoS \& PoSt schemes.
    \item We propose a new Proof of Encrypted Storage (PoES) that provides unforgeable PoS \& PoSt on encrypted data in an efficient and scalable manner. To mitigate Sybil attacks, we incorporate a random storage miner selection algorithm as an incentive mechanism.
    \item To reduce storage redundancy, FileDES stores only file increments. Additionally, we introduce a rollup-based batch verification method that verifies multiple proofs with only one publicly verifiable proof submitted to the blockchain.
\end{itemize}

\section{Related Work}
\label{sec:related:work}

\begin{table*}[!t]
	\begin{center}
	\begin{threeparttable}
		\caption{Comparison of Our Work with Existing DSNs}
            \label{table:features}
            \tabcolsep=0.2cm
			\begin{tabular}{l c c c c c c c c c}
				\toprule[1pt]
				 & \multicolumn{1}{c}{\textbf{\begin{tabular}[c]{@{}c@{}}Consensus\\ Algorithm\end{tabular}}} 
				 & \textbf{Ledger} 
				 & \multicolumn{1}{c}{\textbf{\begin{tabular}[c]{@{}c@{}}Security\\ Level\end{tabular}}}
				 & \multicolumn{1}{c}{\textbf{\begin{tabular}[c]{@{}c@{}}Low\\ Redundancy\end{tabular}}} 
				 & \multicolumn{1}{c}{\textbf{\begin{tabular}[c]{@{}c@{}}Privacy \\ Protection\end{tabular}}} 
                & \multicolumn{1}{c}{\textbf{\begin{tabular}[c]{@{}c@{}}Data Availability \\ after Encryption\end{tabular}}} 
				 & \multicolumn{1}{c}{\textbf{\begin{tabular}[c]{@{}c@{}}Storage proof\\ method\end{tabular}}} 
				 & \textbf{Scalability}
                & \multicolumn{1}{c}{\textbf{\begin{tabular}[c]{@{}c@{}}Batch\\ Verification\end{tabular}}}\\
                \midrule[0.5pt]
				
				Filecoin\cite{filecoin} & \multicolumn{1}{c}{\begin{tabular}[c]{@{}c@{}}Expected\\ consensus\end{tabular}} & \multicolumn{1}{c}{\begin{tabular}[c]{@{}c@{}}DAG\\ (tipset)\end{tabular}} & \CIRCLE & \Circle & \LEFTcircle & \Circle & DRG+MT & \Circle & \Circle\\

    			FileDAG\cite{guo2022filedag} & DAG-Rider$^\dag$ & DAG & \CIRCLE & \CIRCLE & \LEFTcircle & \Circle & DRG+MT & \LEFTcircle & \Circle\\
        
				Storj\cite{storj} & PoW & Chain & \LEFTcircle & \LEFTcircle & \LEFTcircle & \Circle & Reputation  & \Circle & \Circle\\
				
				Sia\cite{sia} & PoW & Chain & \LEFTcircle & \Circle & \LEFTcircle & \Circle & MT & \LEFTcircle & \Circle\\
				
				Swarm\cite{swarm} & PoW & Chain & \LEFTcircle & \Circle & \LEFTcircle & \Circle & MT & \LEFTcircle & \Circle\\

				\textbf{FileDES} & DAG-Rider$^\dag$ & DAG & \CIRCLE & \CIRCLE & \CIRCLE & \CIRCLE & Encryption+MT & \CIRCLE & \CIRCLE\\
    
				\bottomrule[1pt]
			\end{tabular}
			\label{tab1}
		\begin{tablenotes}
		    \item[DRG] Depth Robust Graph
                \item[MT] Merkle Tree
		    \item[$\dag$] Modified version
		\end{tablenotes}
	\end{threeparttable}
	\end{center}
\end{table*}

\subsection{Decentralized Storage Networks}
Currently, there exist many studies and implementations that aim to develop a DSN. Filecoin~\cite{filecoin} is a DSN that utilizes InterPlanetary File System~\cite{ipfs} (IPFS) and Expected Consensus mechanism to adjust storage miner's chances of winning based on their storage quantity and quality. Filecoin introduces the concept of tipsets to enable concurrent block processing, allowing multiple blocks to be confirmed at the same block height. Sia~\cite{sia} is a DSN that employs the Proof-of-Work consensus and creates a Merkle tree for each file, with the root hash serving as the file identifier. Sia can verify whether a file has been uploaded before and enable deduplication at the directory level. It uses the Threefish~\cite{threefish} algorithm to encrypt files, making it difficult to support version indexing and sharing. Additionally, its ledger structure is a chain, which cannot support concurrent block processing by nature. Storj~\cite{storj} and Swarm~\cite{swarm} are DSNs built on Ethereum~\cite{ethereum}. They both make use of the Proof-of-Stake consensus and a chain-based ledger. 
FileDAG~\cite{guo2022filedag} is a DSN that builds on the implementation of Filecoin. It achieves file-level deduplication when storing multi-version files and employs the DAG-Rider~\cite{dagrider} consensus mechanism to create a two-layer DAG-based blockchain ledger for flexible and storage-saving file indexing. Zhang et al.~\cite{zhang2022enabling} proposed a secure mechanism over decentralized storage by employing smart contracts to incorporate the message-locked encryption~\cite{bellare2013message} (MLE) scheme, which protects data privacy and enables secure deduplication over encrypted data. Ismail et al.~\cite{ismail2022cost} evaluated the costs and latency performance of nine state-of-the-art systems and discussed their compatibility with the decentralized features of blockchain technology.

It is pertinent to note that the concerns raised in [C1-C3] have not been satisfactorily addressed by the current DSNs. In order to substantiate this claim, Table~\ref{table:features} presents a concise comparison study over the attributes of FileDES and those of other popular DSNs. In summary, Sia, Storj, and Swarm have simplified their proof systems to improve efficiency, but at the expense of security. Storj and FileDAG respectively utilize erasure coding and incremental generation to reduce redundancy. For privacy protection, the only approach that can be adopted by these DSNs (except FileDES) is to apply simple encryption but such an approach fails to guarantee data availability for encrypted data. Sia and Swarm improve the storage proof efficiency to enhance the scalability. FileDAG utilizes a DAG-based blockchain to enhance the scalability of the consensus module. Additionally, to our best knowledge, no existing system considers batch verification.

\subsection{Blockchain-based Data Sharing}
The use of blockchain for data sharing is akin to the traditional DSN model concerning the management of user data in a blockchain-based system. Several viable solutions exist to accomplish blockchain-based data sharing.
For example, MedChain~\cite{shen2019medchain} is a healthcare data-sharing scheme that employs blockchain, digest chain, and structured P2P network techniques to enhance efficiency and security in sharing healthcare data. 
SPDL~\cite{9761745} is a decentralized learning system that employs blockchain and Byzantine Fault-Tolerant consensus to facilitate secure and private data sharing during the machine learning process.
Ghostor~\cite{hu2020ghostor} is a data-sharing system that utilizes decentralized trust to safeguard user privacy and data integrity against compromised servers. The system conceals user identities from the server and enables users to detect server-side integrity violations. 
TEMS~\cite{tems} is a framework that extends blockchain trust from on-chain to the physical world by employing a Trusted Execution Environment (TEE) system with anti-forgery data and a consistency protocol to continuously and truthfully upload data to blockchain.

\section{Models and Preliminaries}
\label{sec:model}

\subsection{The DSN Model}
The DSN under our study comprises of four distinct entities, namely client, storage miner, retrieval miner, and rollup miner. A client serves as a media for users to interact with the storage system. A DSN design must incorporate fundamental functionalities, i.e., $\mathsf{Put}$, $\mathsf{Get}$, and $\mathsf{Manage}$,  that can be executed by both clients and miners.
    $$\mathsf{DSN = (Put, Get, Manage}).$$
    \begin{itemize}
	\item $\mathsf{Put(F, SM)}\rightarrow \mathsf{CID}$: A client executes the $\mathsf{Put}$ protocol to upload the file $\mathsf{F}$ to a storage miner $\mathsf{SM}$ for storing the file in the DSN, and obtain a file identifier $\mathsf{CID}$.
	\item $\mathsf{Get(CID, ReM)}\rightarrow \mathsf{F}$. A client executes the $\mathsf{Get}$ protocol to send the file identifier $\mathsf{CID}$ to a retrieval miner $\mathsf{ReM}$ to retrieve data from the DSN.
	\item $\mathsf{Manage(\mathsf{F}, SM, ReM, RoM)}$. This protocol coordinates the network participants to control the available storage, audit services, and repair potential faults. A rollup miner $\mathsf{RoM}$ is capable of generating aggregated proofs for DSN management. 
    \end{itemize}

\subsection{Adversary Model}
\label{sec:attacks}
FileDES considers the presence of a Byzantine adversary who is restricted to controlling no more than $1/3$ of the total number of nodes. If this particular constraint is violated, achieving a consensus becomes impossible due to the Byzantine General Problem~\cite{castro1999practical}. The adversary has at its disposal a variety of attack forms, including Sybil attacks, Generation attacks, falsifying proofs, and collusion. Among the critical issues associated with such attacks are the vulnerabilities caused by Sybil and Generation attackers. Sybil attackers can manipulate the system by creating multiple Sybil identities to receive various replications of a file, while Generation attackers can manipulate the system with a small seed or a program to re-generate a replica they claim to store. To provide a better understanding of these two attacks, we formally define them as follows.

\begin{definition}[Sybil Attack]
The Sybil Attack is a type of security threat in which an attacker, referred to as $\mathcal{A}_{\mathsf{sybil}}$, creates multiple fake identities, known as Sybil identities, denoted as $\{P_{0}...P_{n}\}$. The attacker claims to have stored $m$ different copies of a file $\mathsf{F}$, but in reality, it only stores $m'<m$ copies. The objective of the Sybil attacker is to successfully forge $m$ valid proofs for the $m$ replicas, which can convince any verifier $\mathcal{V}$ that $\mathsf{F}$ is stored as $m$ independent replications by the attacker.
\end{definition}

\begin{definition}[Generation Attack]
The Generation Attack is a type of security threat in which an attacker, referred to as $\mathcal{A}_{\mathsf{gen}}$, claims to store a replica of a file $\mathsf{F}$, but in reality it does not. The attacker succeeds by generating the replica using a small seed or a program, which is much smaller in storage space compared to that of the replica, every time it needs to produce a proof of the replica.
\end{definition}

Sybil and Generation attacks are two distinct types of security threats. $\mathcal{A}_{\mathsf{sybil}}$ generates a large number of Sybil identities, to make multiple claims of storing different replicas of a file. The attack is profitable when $\mathcal{A}_{\mathsf{sybil}}$ abandons some replicas of the file it claims to store. In contrast, the Generation Attack entails a small seed or a program to generate a replica of a file, which the attacker, $\mathcal{A}_{\mathsf{gen}}$, claims to store. This type of attackers aims to conserve storage space by eliminating redundant data, while being capable of restoring the file at any time using the aforementioned malicious program.

\subsection{Preliminaries}
\subsubsection{Unidirectional Proxy Re-Encryption (PRE)}
A unidirectional Predicate Encryption (PRE) scheme can be formally defined as a set of algorithms $$\mathsf{PRE} = (\mathsf{KeyGen}, \\ \mathsf{ReKeyGen}, \mathsf{Enc}, \mathsf{ReEnc}, \mathsf{Dec}).$$
In FileDES, the PRE scheme facilitates a storage miner to transform a ciphertext encrypted using one key into a new ciphertext with a different key, without requiring access to the plaintext. This functionality enables a data owner to share encrypted data with multiple recipients, eliminating the need for it to re-encrypt the data for each recipient.

\subsubsection{zk-SNARK}
The Zero-Knowledge Succinct Non-Interactive Argument of Knowledge (zk-SNARK) enables the verification of the authenticity of a relation, while keeping confidential information undisclosed. We employ zk-SNARK in the creation of compact, fixed-length PoS \& PoSt. Essentially, a zk-SNARK is a non-interactive argument $$\mathsf{ZK} = (\mathsf{Setup, Prove, Verify})$$ for a relation $R$; it upholds the properties of completeness, soundness, zero-knowledge, and succinctness. Note that $\mathsf{ZK.Setup}$ decides a set of public parameters.

\section{The Design of FileDES}
\label{sec:design}
This section first presents the encryption mechanism as a means to safeguard against Sybil and Generation attacks, which simultaneously ensures data privacy. The primary components of FileDES are $\mathsf{PoES}$ and $\mathsf{Rollup}$, which are introduced in Sections \ref{sec:poes} and \ref{sec:multi:version}, respectively. 

\subsection{Encrypted Storage}
The encrypted storage designed for FileDES provides a dual-purpose solution. Firstly, the encryption serves as a preventative measure to counter Sybil attacks and Generation attacks -- once agreed to store encrypted replicas, a Sybil (or Generation) attacker without knowledge of the secret keys is unable to reproduce the replicas with plaintext (or a seed/malicous program). Secondly, encryption enhances data privacy, which addresses the privacy issues highlighted in [C2] -- encrypted data is unreadable without proper authorization, which helps to protect data privacy and prevent data breaches.

Our design incorporates two encryption options based on either RSA or PRE. The rationale for deploying two encryption tools is to cater to the various data sharing scenarios:  data can be accessed freely with RSA or under permission through authorization with PRE. In each of these methods, a client encrypts a file with multiple secret keys to create different encrypted replicas, which are then uploaded to storage miners. The replicas, as ciphertexts, appear to be random to a Generation attacker, making it challenging to partially or completely delete a replica and re-generate it with a malicious program, unless the attacker can manage to steal the secret key. In addition, the best way for a Sybil attacker to succeed in breaking FileDES is to generate all encrypted replicas from the one piece of plaintext. However, even if an attacker knows the plaintext (in RSA-based option) of a replica, it cannot restore all other replicas without the corresponding secret keys. Apart from encryption, FileDES also makes use of a random selection algorithm to mitigate Sybil attacks leveraging a fairness guarantee. A detailed and formal security analysis of FileDES considering Generation and Sybil attacks is elaborated in Section~\ref{sec:analysis}.

\subsection{Proof of Encrypted Storage (PoES)}
\label{sec:poes}
To enhance data privacy and improve the proof system, we propose the Proof of Encrypted Storage (PoES) algorithm. PoES enables a storage miner $\mathsf{SM}$ to store encrypted data and efficiently provide PoSs and PoSts at low cost. The PoES algorithm is represented by a set of polynomial-time algorithms, denoted as $$\mathsf{PoES=(Setup, Prove, CycleProve, Verify)},$$ illustrated in Algorithm~\ref{alg:PoES}. The $\mathsf{PoES.Prove}$ algorithm is utilized to provide PoSs, whereas the $\mathsf{PoES.CycleProve}$ algorithm is intended for PoSts. Further details are provided below.

\begin{algorithm}[!htb]
	\DontPrintSemicolon
	\caption{Proof of Encrypted Storage (PoES)}
	\label{alg:PoES}
	
 	$\triangleright$ \textcolor{BrickRed}{$\mathsf{PoES.Setup}$ (by a client)}\;
	\textbf{Inputs:} $\mathsf{ctr, pk, F, \overrightarrow{\mathsf{SM}}}$\\
        \textbf{Outputs:} $\overrightarrow{\mathsf{CID}}$\\
	Calculate the $i$th increment $\mathsf{F}[i]$ for $\mathsf{F}$\\
	\textcolor{gray}{\# generate $j$ replicas for the $i$th increment}\\
	\While{$j \leq \mathsf{ctr}$}{
            $
            \mathsf{F}_E^j \leftarrow \left\{
               \begin{array}{ll}
                     \mathsf{RSA.Enc}(\sk_j,\mathsf{F}[i]), &  \text{Plan A}  \\
                     \mathsf{PRE.Enc}(\pk_j,\mathsf{F}[i]), &  \text{Plan B}
                \end{array}
            \right.
            $\\
		Generate the file identifier $\mathsf{CID}_{[i,j]}$ for $\mathsf{F}_E^j$\\
		$\mathsf{SM}_r \leftarrow \mathsf{RandomSelect}(\overrightarrow{\mathsf{SM}})$\\
		$\mathsf{DSN.Put}(\mathsf{F}_E^j, \mathsf{SM}_r)$\\
		$\overrightarrow{\mathsf{CID}}[i][j]=\mathsf{CID}_{[i,j]}$\\
	}
	
	$\triangleright$ \textcolor{BrickRed}{$\mathsf{RandSelect}$ (by a client)}\;
	\textbf{Inputs:} $\overrightarrow{\mathsf{SM}}$\\
        \textbf{Output:} $\mathsf{SM}_r$\\	
	\For{$\mathsf{SM}_{i}$ in $\overrightarrow{\mathsf{SM}}$} {
		Get its consensus power $\mathsf{POW}_{\mathsf{SM}_{i}}$\\
		$\Delta H \leftarrow \mathsf{DealTime}(\mathsf{SM}_{i})$\\
  $P_{\mathsf{SM}_{i}} \leftarrow \tilde{\Gamma}(\mathsf{POW}_{\mathsf{SM}_{i}}, \Delta H)$\\
		Add $P_{\mathsf{SM}_{i}}$ to $\overrightarrow{P}$\\
	}
	\While{$b=0$}{
		$r \leftarrow \mathsf{Gen}(1^{k})$ and $\mathsf{SM}_r \leftarrow \mathsf{Select}(r,\overrightarrow{P})$\\
		Create a deal with $\mathsf{SM}_r$ and set $b=1$ if succeed\\
	}
	
	$\triangleright$ \textcolor{BrickRed}{$\mathsf{PoES.Prove}$ (by a storage miner)}\;
	\textbf{Inputs:} $\mathsf{c, CID}$\\
        \textbf{Outputs:} $\pi_{\mathsf{POS}}$\\ 
	Get $\tilde{\mathsf{F}}_E$ from the local storage by $\mathsf{CID}$\\
	Compute a Merkle tree $\mathcal{M}$ with root $\mathsf{rt}$ for $\tilde{\mathsf{F}}_E$\\
    Compute a path $\mathsf{\tau}_c$ as a part of the proof\\
	$\pi_{\mathsf{POS}} \leftarrow \mathsf{ZK.Prove}(\mathsf{\tau}_c, \mathsf{rt}, \mathsf{c})$\\

	$\triangleright$ \textcolor{BrickRed}{$\mathsf{PoES.CycleProve}$ (by a storage miner)}\;
	\textbf{Inputs:} $\mathsf{c, t, CID}$\\
        \textbf{Outputs:} $\pi_{\mathsf{POST}}$\\
         Get $\tilde{\mathsf{F}}_E$ from the local storage by $\mathsf{CID}$\\
         Compute a Merkle tree $\mathcal{M}$ with root $\mathsf{rt}$ for $\tilde{\mathsf{F}}_E$\\
         Set $\pi_{\mathsf{POST}}=\perp$\\
        \For {$i=1...t$}{
        	$c'=H(\pi_{\mathsf{POST}}||c||i)$\\
        	$\pi_{\mathsf{POS}}=\mathsf{PoES.Prove}(c', \mathsf{CID})$\\
        	$\pi_{\mathsf{POST}}=\mathsf{ZK.Prove}(\pi_{\mathsf{POS}}, \pi_{\mathsf{POST}}, \mathsf{rt}, c, i)$\\
		}
	$\triangleright$ \textcolor{BrickRed}{$\mathsf{PoES.Verify}$ (by a smart contract)}\;
	\textbf{Inputs:} $\pi_{\mathsf{POS}}$ (or $\pi_{\mathsf{POST}}$), $rt$, $c$\\
	\textbf{Outputs:} $b$\\
	$b=0$\\
	\If {$\mathsf{ZK.Verify}$$(\pi_{\mathsf{POS}}$ (or $\pi_{\mathsf{POST}}$)$, rt, c)=\perp$}{
		Penalize the storage miner who provides the proof\\
	} \Else{
            $b=1$\\
        }
\end{algorithm} 

$\mathsf{PoES.Setup(F, ctr, pk, \overrightarrow{\mathsf{SM}})}\rightarrow \overrightarrow{\mathsf{CID}}$: 
The process of uploading a file $\mathsf{F}$ to storage miners involves several steps. First, the client checks the version of $\mathsf{F}$ and calculates the incremental changes between the current version and the previous version. We denote each incremental change as $\mathsf{F}[i]$, where $\mathsf{F}[0]$ is the initial version. The client then creates $\mathsf{ctr}$ different encrypted replicas of $\mathsf{F}[i]$ using either RSA (Plan A) or PRE (Plan B) for various data sharing purposes. Each replica is assigned a unique content identifier $\mathsf{CID}_{[i,j]}$, indicating that it is the $j$th replica of the $i$th increment. Next, the $\mathsf{RandomSelect}$ algorithm is called to randomly select a storage miner, denoted as $\mathsf{SM}_r$, from a list of storage miners $\overrightarrow{\mathsf{SM}}$. Finally, the client uploads the replica $\mathsf{F}_E^j$ to $\mathsf{SM}_r$ using the $\mathsf{DSN.Put}$ protocol. The content identifier of each replica is recorded in $\overrightarrow{\mathsf{CID}}$ for future verification and retrieval purposes.

$\mathsf{RandSelect(\overrightarrow{\mathsf{SM}})}\rightarrow \mathsf{SM}_r$:
The algorithm responsible for selecting a storage miner to store an uploaded replica is invoked by $\mathsf{PoES.Setup}$. The client initially calculates $\overrightarrow{P}$ by traversing each storage miner [line 15-19]. Within each loop, the client obtains the consensus power of a storage miner $\mathsf{SM}_i$, denoted as $\mathsf{POW}_{\mathsf{SM}_{i}}$, which describes the service quality of  $\mathsf{SM}_{i}$. Subsequently, the block height difference $\Delta H$ between the current height and the last confirmed storage deal of $\mathsf{SM}_{i}$ is calculated. Then $\tilde{\Gamma}(\mathsf{POW}_{\mathsf{SM}_{i}}, \Delta H)$ is computed, which comprises of two steps:
(1)
$W_{\mathsf{SM}_{i}}=w\frac{\mathsf{POW}_{\mathsf{SM}_{i}}}{\sum_{i=1}^{|\overrightarrow{\mathsf{SM}}|} {\mathsf{SM}_{i}}} + (1-w)\frac{\Delta H}{H}$;
and (2)
$P_{\mathsf{SM}_{i}}=\frac{W_{\mathsf{SM}_{i}}}{\sum_{i=1}^{|\overrightarrow{\mathsf{SM}}|} W_{\mathsf{SM}_{i}}}$,
where $w\in(0,1)$ is an adjusted weight parameter set by the client. 
The normalized probability $P_{\mathsf{SM}_{i}}\in(0,1)$ represents the likelihood of selecting the storage miner $\mathsf{SM}_{i}$ as a storage provider. After $P_{\mathsf{SM}_{i}}$ is successfully obtained for each $\mathsf{SM}_i$, the client generates a random number and selects a storage miner $\mathsf{SM}_r$ based on the probability distribution $\overrightarrow{P}$ [line 20-22].  This stochastic process can effectively mitigate Sybil attacks, as it is arduous for an attacker to simultaneously obtain the tasks of storing replicas of a file.

$\mathsf{PoES.Prove(c, CID)}\rightarrow \pi_{\mathsf{POS}}$: 
The proof algorithm involves a storage miner who generates a proof of storage ($\pi_{\mathsf{POS}}$) for an encrypted replica that has been stored locally. Initially, the miner retrieves the replica $\tilde{\mathsf{F}}_E$ from its local storage using $\mathsf{CID}$. Then the miner computes a Merkle tree $\mathcal{M}$ whose root is $\mathsf{rt}$ for $\tilde{\mathsf{F}}_E$. 
The process of creating the Merkle tree involves partitioning $\tilde{\mathsf{F}}_E$ into 256-byte chunks, and each chunk is treated as a leaf of $\mathcal{M}$. The storage miner then selects a specific leaf in the tree using the random challenge $c$ and determines a path $\mathsf{\tau}_c$ from that leaf to $\mathsf{rt}$. Finally, the miner utilizes $\mathsf{ZK.Prove}$ to produce a succinct proof $\pi_{\mathsf{POS}}$ for the entire process. This proof is vital in verifying the accuracy of the process [line 27-29].

$\mathsf{PoES.CycleProve(c, t, CID)}\rightarrow \pi_{\mathsf{POST}}$: 
In the cycle prove algorithm, a storage miner is required to generate a proof of spacetime ($\pi_{\mathsf{POST}}$) to demonstrate that it is continuously storing the data. This proof can be quickly verified by other miners. $\mathsf{PoES.CycleProve}$ is analogous to a multi-round $\mathsf{PoES.Prove}$. In each round, the storage miner first generates a round-challenge $c'$ using the input challenge $c$, the round number $i$, and the proof of spacetime from the previous round. Then, the storage miner calls $\mathsf{PoES.Prove}$, which takes $c'$ and $\mathsf{CID}$ as inputs, to generate a storage proof $\pi_{\mathsf{POS}}$. Finally, the storage miner calls $\mathsf{ZK.Prove}$ to generate a temporal proof for the current round $i$. After $t$ rouunds, $\mathsf{PoES.CycleProve()}$ outputs the final $\pi_{\mathsf{POST}}$.

$\mathsf{PoES.Verify(}$$\pi_{\mathsf{POS}}$ (or $\pi_{\mathsf{POST}}$) $, rt, c) \rightarrow 1/0$: 
The smart contract checks if a proof ($\pi_{\mathsf{POST}}$ or $\pi_{\mathsf{POS}}$) is a valid zk proof using the verify process of zk-SNARK. If the proof passes the verification process, the algorithm outputs 1; otherwise outputs 0 and panelizes (e.g, broadcast a message to suggest $P_{\mathsf{SM}_{i}}=0$) the storage miner who provides the proof.

\subsection{Batch Verification and File Retrieval}
\label{sec:multi:version}
To tackle the issue stated in [C3], we suggest a novel batch verification approach that relies on the concept of zk-rollup. Batch verification entails transferring the verification of many proofs to an aggregated succinct proof, which effectively reduces the computational and verification workload on the blockchain. Our proposed batch verification and file retrieval scheme comprises of a set of polynomial-time algorithms, denoted as $$\mathsf{Rollup = (Prepare, Collect, Aggregate)},$$ as well as a function $\mathsf{Retrieve}()$.

\begin{algorithm}[!t]
	\DontPrintSemicolon
	\caption{Batch Verification and File Retrieval}
    \label{alg:Mul_Verify}
	$\triangleright$ \textcolor{BrickRed}{$\mathsf{Rollup.Prepare}$ (by a storage miner)}\;
	\textbf{Inputs:} $\mathsf{\mathsf{CID}, RoM}$\\
		Send a $\mathsf{CID}$ to a rollup miner $\mathsf{RoM}$\\
		Wait for an aggregated proof confirmed on chain\\
	
	$\triangleright$ \textcolor{BrickRed}{$\mathsf{Rollup.Collect}$ (by a rollup miner)}\;
	\textbf{Inputs:} $\overrightarrow{\mathsf{CID}}, t$\\
	\textbf{Output:} $\overrightarrow{\pi_{\mathsf{POST}}}$\\
		Get the $\overrightarrow{\mathsf{SM}}$ who stores the increments implied by $\overrightarrow{\mathsf{CID}}$ \\
		\For{each $\mathsf{SM}$ (storing $\mathsf{CID}$) in $\overrightarrow{\mathsf{SM}}$}{
			Obtain a random challenge $c$\\
			Request the $\mathsf{SM}$ execute $\mathsf{PoES.CycleProve}(c, t, \mathsf{CID})$\\
			AsycWait for a reply $\pi_{\mathsf{POST}}$\\
			\If {$\pi_{\mathsf{POST}}$ is valid}{
				Add $\pi_{\mathsf{POST}}$ to $\overrightarrow{\pi_{\mathsf{POST}}}$\\
			}
		}

	$\triangleright$ \textcolor{BrickRed}{$\mathsf{Rollup.Aggregate}$ (by a rollup miner)}\;
	\textbf{Inputs: } $\overrightarrow{\pi_{\mathsf{POST}}}$\\
	\textbf{Output:} $\pi_{\mathsf{ROLL}}$\\
	Get a prepared rollup $\mathsf{circuit}$ according to $|\overrightarrow{\pi_{\mathsf{POST}}}|$ \\
	Input $\overrightarrow{\pi_{\mathsf{POST}}}$ to $\mathsf{circuit}$ and obtain $\pi_{\mathsf{ROLL}}$\\
	Submit $\pi_{\mathsf{ROLL}}$ to a smart contract\\

		$\triangleright$ \textcolor{BrickRed}{$\mathsf{Retrieve}$ (by a client)}\;
		\textbf{Inputs:} $\overrightarrow{\mathsf{CID}}$\\ 
		\textbf{Outputs:} $\mathsf{F}$\\ 
		Fetch from blockchain the $\pi_{\mathsf{ROLL}}$ corresponding to $\overrightarrow{\mathsf{CID}}$\\
		\If {$\pi_{\mathsf{ROLL}}$ is a valid rollup proof}{
			\For {each $\mathsf{CID}_{[i,j]}$ in $\overrightarrow{\mathsf{CID}}$}{
				
    $
    \mathsf{F}_E\leftarrow \left\{
        \begin{array}{llll}
            \mathsf{DSN.Get}(\mathsf{CID}_{[i,j]},\mathsf{ReM}), & \text{Plan A}\\
            \mathsf{GetReEnc}(\mathsf{CID}_{[i,j]},\mathsf{ReM}), & \text{Plan B}
        \end{array}
        \right.
    $\\
    $
        \mathsf{F}_D\leftarrow \left\{
        \begin{array}{llll}
            \mathsf{RSA.Dec}(\pk,\mathsf{F}_E), & \text{Plan A}\\
            \mathsf{PRE.Dec}(\sk,\mathsf{F}_E), & \text{Plan B}
        \end{array}
        \right.      
    $      \\
				Add $\mathsf{F}_D$ to $\vec{\mathsf{F}}[i]$\\
			}
		Recover $\mathsf{F}$ by aggregating $\vec{\mathsf{F}}$\\
		} \Else {
			$\mathsf{F}=\perp$\\
		}
\end{algorithm}

$\mathsf{Rollup.Prepare(CID, RoM)}$: 
Rollup miners have the duty of producing aggregated proofs for a number of files. A storage miner executes $\mathsf{Rollup.Prepare}$ to request rollup miners to aggregate proofs. It sends a $\mathsf{CID}$ to a rollup miner $\mathsf{RoM}$, and waits for the corresponding aggregated proof confirmed on blockchain. 

$\mathsf{Rollup.Collect}(\overrightarrow{\mathsf{CID}}, t)\rightarrow \overrightarrow{\pi_{\mathsf{POST}}}$:
Rollup miners are permitted to continuously gather $\pi_{\mathsf{POST}}$ to produce an aggregated proof. The $\mathsf{Rollup.Collect}$ algorithm takes a vector of content identifiers $\overrightarrow{\mathsf{CID}}$ and $t$ as inputs to compute $\pi_{\mathsf{POST}}$. Initially, the rollup miner identifies the storage miners that keep the file increments indicated by $\overrightarrow{\mathsf{CID}}$, one miner for each $\mathsf{CID}$. Then the rollup miner asks each storage miner for a proof using a challenge value $c$. The rollup miner receives the proofs using $\mathtt{AsyncWait}$ and maintains the valid proofs in $\overrightarrow{\pi_{\mathsf{POST}}}$.

$\mathsf{Rollup.Aggregate}(\overrightarrow{\pi_{\mathsf{POST}}})\rightarrow \pi_{\mathsf{ROLL}}$:
We employ the aggregate algorithm to calculate a succinct proof $\pi_{\mathsf{ROLL}}$ of 256 bytes in length. This proof is based on multiple valid proofs $\overrightarrow{\pi_{\mathsf{POST}}}$. To generate the $\pi_{\mathsf{ROLL}}$, we utilize a rollup circuit that varies in size depending on the size of $\overrightarrow{\pi_{\mathsf{POST}}}$. Specifically, the circuit is pre-configured with different sizes, for example, a combination of \{1KB, 4KB, 8KB, $\cdots$\}. The resulting proof ($\pi_{\mathsf{ROLL}}$) is then transmitted to a smart contract for final confirmation. A brief proof is also recorded on the blockchain to facilitate on-chain verification when the corresponding file is retrieved.

$\mathsf{Retrieve}(\overrightarrow{\mathsf{CID}})\rightarrow \mathsf{F}$:
This process is performed by a client who wants to retrieve a certain file. The client first fetches $\pi_{\mathsf{ROLL}}$ corresponding to $\overrightarrow{\mathsf{CID}}$ and verifies whether the proof is valid. Then for each file piece required to recover the complete file, the client requests it from the retrieval miners by calling $\mathsf{DSN.Get}$ or $\mathsf{F}_E\leftarrow \mathsf{GetReEnc}(\mathsf{CID}_{[i,j]},\mathsf{ReM})$, and decrypts it locally to obtain $\mathsf{F}_D$. 
All the file pieces are collected in $\vec{\mathsf{F}}[i]$. The client finally recovers the file by putting all file pieces together. 
Figure \ref{fig:workflow} illustrates the execution of the FileDES protocols.
\begin{figure}[!t] 
	\centering 
	\includegraphics[width=0.45\textwidth]{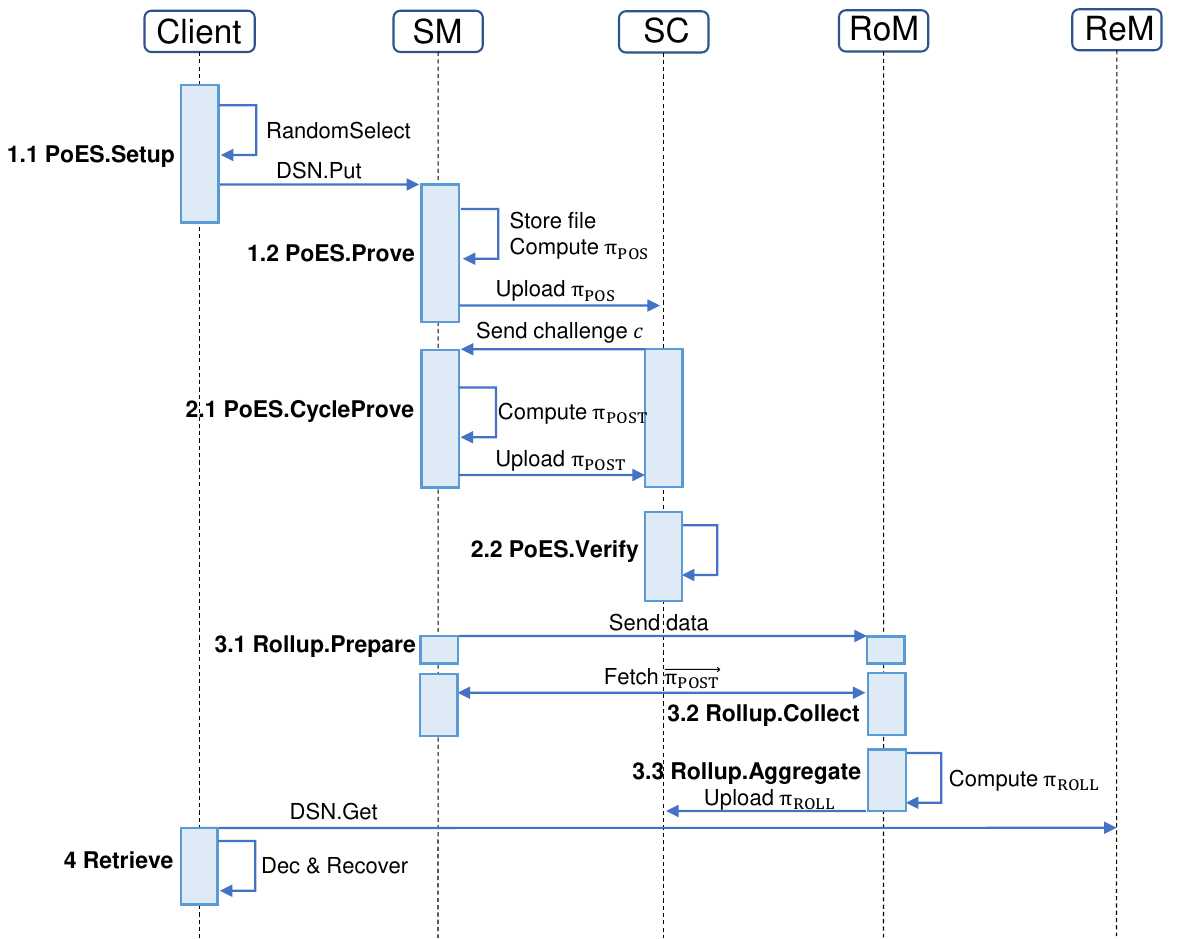} 
	\caption{An example of executing all protocols} 
	\label{fig:workflow} 
\end{figure}

\section{Security Analysis}
\label{sec:analysis}
\subsection{Security of PoES}
\begin{theorem}[Unforgeability of $\pi_{\mathsf{POS}}$/$\pi_{\mathsf{POST}}/\pi_{\mathsf{ROLL}}$]
    An honest storage miner $\mathsf{SM}$ or rollup miner $\mathsf{RoM}$ can convince storage challengers by sending them valid $\pi_{\mathsf{POS}}$/$\pi_{\mathsf{POST}}$ or $\pi_{\mathsf{ROLL}}$; an adversary $\mathcal{A}$ without honestly storing files cannot forge a valid proof based on the public information.
\label{theorem:unforgeability}
\end{theorem}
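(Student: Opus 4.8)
The plan is to split the statement into its two halves --- a completeness claim (an honest $\mathsf{SM}$ or $\mathsf{RoM}$ always convinces the verifier) and a soundness/unforgeability claim (a cheating $\mathcal{A}$ cannot) --- and to reduce each half to the guarantees of the underlying zk-SNARK $\mathsf{ZK}$ (completeness, knowledge soundness, succinctness) together with the collision resistance of the hash $H$ that is used both to build the Merkle trees and to chain the rounds of $\mathsf{PoES.CycleProve}$.

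For completeness I would argue in three stages. For $\pi_{\mathsf{POS}}$: an honest miner holding $\tilde{\mathsf{F}}_E$ reconstructs the Merkle tree with the registered root $\mathsf{rt}$, reads off the authentication path $\tau_c$ selected by the challenge $c$, and invokes $\mathsf{ZK.Prove}(\tau_c,\mathsf{rt},c)$; since $(\tau_c,\mathsf{rt},c)\in R$, completeness of $\mathsf{ZK}$ forces $\mathsf{ZK.Verify}$ to accept, so $\mathsf{PoES.Verify}$ outputs $1$. For $\pi_{\mathsf{POST}}$ I would run an induction on the round counter $i=1,\dots,t$: each round deterministically derives $c'=H(\pi_{\mathsf{POST}}\|c\|i)$, produces a valid inner $\pi_{\mathsf{POS}}$ by the previous case, and wraps it with $\mathsf{ZK.Prove}$; completeness at every step propagates to a valid final $\pi_{\mathsf{POST}}$. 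For $\pi_{\mathsf{ROLL}}$, an honest $\mathsf{RoM}$ feeds a vector of already-valid $\overrightarrow{\pi_{\mathsf{POST}}}$ into the rollup circuit, whose statement ``all inputs verify'' is then true, so $\mathsf{ZK.Prove}$ outputs an accepted $\pi_{\mathsf{ROLL}}$.

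For soundness I would proceed by extraction. Assume a PPT $\mathcal{A}$ that does not honestly store the replica yet outputs, with non-negligible probability, a proof accepted by $\mathsf{PoES.Verify}$ against a root $\mathsf{rt}$ that was fixed (via $\mathsf{CID}$ at deal time) before the random challenge $c$ was drawn. Knowledge soundness of $\mathsf{ZK}$ yields an extractor that pulls from $\mathcal{A}$ a witness $\tau_c$, i.e. a valid Merkle path from the $c$-th leaf to $\mathsf{rt}$, up to the zk-SNARK knowledge error. Because $\mathsf{rt}$ was committed before $c$, collision resistance of $H$ implies each leaf position has at most one value admitting a valid path; running the extractor across enough independent challenges therefore reconstructs all leaves of $\tilde{\mathsf{F}}_E$, contradicting the hypothesis that $\mathcal{A}$ does not store it. Hence the forging probability is bounded by the sum of the zk-SNARK knowledge error and the collision probability of $H$, both negligible. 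For $\pi_{\mathsf{POST}}$ I would apply this extraction round by round and take a union bound over the $t$ rounds; since $c'=H(\pi_{\mathsf{POST}}\|c\|i)$ binds round $i$ to the already-fixed proof of round $i-1$, the adversary is forced to have the replica available throughout all rounds, which is exactly the spacetime guarantee. For $\pi_{\mathsf{ROLL}}$, knowledge soundness of the rollup SNARK extracts the vector $\overrightarrow{\pi_{\mathsf{POST}}}$, each entry of which must individually verify, and the per-entry argument then applies. Generation and Sybil attackers fall out as special cases: without the secret keys they cannot even produce the ciphertext leaves, so the extraction cannot succeed for them --- this dovetails with the encryption-based analysis in the remainder of the section.

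The step I expect to be the main obstacle is the soundness direction, specifically (i) rigorously establishing that $\mathsf{rt}$ is pinned down by public information prior to the challenge, so that the Merkle commitment is genuinely binding under $\mathsf{PoES.Verify}$; and (ii) upgrading the per-challenge knowledge extraction into a global ``the whole replica is recoverable'' statement while controlling the accumulated knowledge-soundness and collision errors across the many challenges and across the $t$ sequential rounds of $\mathsf{PoES.CycleProve}$. The completeness direction and the rollup layer should then be comparatively routine invocations of zk-SNARK completeness and knowledge soundness.
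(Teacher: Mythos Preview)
Your proposal is correct and, if anything, more careful than the paper's own argument; the two proofs do not coincide in structure.

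The paper's proof is brief and largely \emph{declarative}: it observes that $\pi_{\mathsf{POS}}$, $\pi_{\mathsf{POST}}$ and $\pi_{\mathsf{ROLL}}$ are all produced by $\mathsf{ZK.Prove}$, invokes completeness for the honest direction, and for the dishonest direction appeals to zk-SNARK soundness (displayed as a knowledge-error inequality) together with the zero-knowledge property to conclude that a malicious $\mathsf{SM}$ cannot supply a fake Merkle path. For $\pi_{\mathsf{ROLL}}$ it additionally leans on blockchain immutability and ledger consistency rather than on a second layer of SNARK extraction. There is no explicit use of the collision resistance of $H$, no ``commit $\mathsf{rt}$ before $c$'' timing argument, and no reconstruction of the replica from extracted paths; the link between ``cannot forge $\tau_c$'' and ``does not honestly store the file'' is left implicit.

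Your route differs in three substantive ways. First, you locate unforgeability squarely in \emph{knowledge soundness} plus Merkle binding, which is the technically correct pairing; the paper also cites zero-knowledge as preventing $\mathcal{A}$ from learning $\tau_c$, a use of that property that is at best heuristic here. Second, your extraction-across-challenges argument actually connects ``proof accepted'' to ``replica recoverable,'' which is what the theorem's phrase ``without honestly storing files'' demands and what the paper's proof does not explicitly deliver. Third, for $\pi_{\mathsf{ROLL}}$ you stay within the SNARK formalism (extract $\overrightarrow{\pi_{\mathsf{POST}}}$, then recurse), whereas the paper switches to a blockchain-consistency argument. What the paper's approach buys is brevity and a direct tie-in to the system components (smart contract, ledger); what yours buys is a proof that would survive in a stand-alone cryptographic model and that makes the quantitative error terms (knowledge error, collision probability, union bound over $t$) explicit. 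The obstacles you flag --- pinning $\mathsf{rt}$ before $c$ and controlling accumulated errors --- are real but routine given the protocol description in Algorithm~1, so your plan should go through without surprises.
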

\begin{proof}
The generation of $\pi_{\mathsf{POS}}$ is carried out by $\mathsf{PoES.Prove}$, which is executed as a sub-process in creating $\pi_{\mathsf{POST}}$ through $\mathsf{PoES.CycleProve}$. Additionally, $\pi_{\mathsf{ROLL}}$ represents a zero-knowledge proof of multiple $\pi_{\mathsf{POS}}$ instances. Firstly, we demonstrate that $\pi_{\mathsf{POS}}$ is unforgeable.

Our model assumes that the adversary $\mathcal{A}$ can act arbitrarily as a Byzantine node. The unforgeability of $\pi_{\mathsf{POS}}$ can be defeated by $\mathcal{A}$ who is able to devise a $\pi_{\mathsf{POS}}'$ that can convince a challenger. A valid $\pi_{\mathsf{POS}}$ is generated through $\pi_{\mathsf{POS}} \leftarrow \mathsf{ZK.Prove}(\mathsf{\tau}_c, \mathsf{rt}, \mathsf{c})$, where $\mathsf{\tau}_c$ denotes the Merkle path acting as a private input to the circuit. The zero-knowledge property of zk-SNARK ensures that the adversary $\mathcal{A}$ cannot acquire or deduce $\mathsf{\tau}_c$ based on the public information provided by the $\mathsf{SM}$s. Moreover, the completeness of zk-SNARK guarantees that an honest $\mathsf{SM}$ with a valid $\pi_{\mathsf{POS}}$ can always convince the storage challenger $\mathsf{SC}$. In the case of a malicious $\mathsf{SM}$, the soundness of zk-SNARK makes it impossible for the $\mathsf{SM}$, with probabilistic polynomial-time (PPT) witness extractor $\mathcal{E}$, to provide a fake Merkle path (used to forge $\pi_{\mathsf{POS}}$) to deceive $\mathsf{SC}$, as depicted in Eq.~\eqref{eq:soundness}. In other words, the challenger can determine whether the storage miner provides a fake private input based on public parameters, such as the common reference string $\mathsf{crs} (\mathsf{pk,vk})$, the proof $\pi$, and the public inputs. 
	\begin{equation}  
	\label{eq:soundness}
	\begin{aligned}
	&
		\Pr{\left[
			\begin{aligned}
				& C(\mathsf{\tau}_c, \vec{w}) \neq R \\
				& {\mathrm{Verify}}(\mathsf{vk}, \mathsf{\tau}_c, \pi, \vec{w}) = 1
			\end{aligned}
			\Biggm\vert
			\begin{aligned}
				& {\mathtt{Setup}}(1^{\lambda}, C) \rightarrow (\mathsf{pk, vk})\\
				& \mathcal{A}(\mathsf{pk, vk}) \rightarrow (\mathsf{\tau}_c, \pi)\\
				& \mathcal{E}(\mathsf{pk, vk}) \rightarrow (\vec{w})
			\end{aligned}
			\right]} \\
   & \le \mathsf{negl}(\lambda)
	\end{aligned}
	\end{equation}
 where $pk$ and $vk$ are respectively the proving and verification keys of zk-SNARK.
	The zero knowledge property of zk-SNARK guarantees that the probability of a malicious PPT $\mathsf{SM}$ forging a $\pi_{\mathsf{POS}}'$ that can convince a challenger is negligible. 
 
 To create a $\pi_{\mathsf{POST}}$, a storage miner periodically executes $\mathsf{PoES.CycleProve}$, which recursively calls $\mathsf{PoES.Prove}$, and outputs a valid $\pi_{\mathsf{POST}}$. The only private input $\tau_c$ is still preserved. Similar to $\pi_{\mathsf{POS}}$, the completeness, soundness, and zero-knowledge properties guarantee that the $\pi_{\mathsf{POST}}$ cannot be forged. The $\mathsf{Rollup.Aggregate}$ bundles multiple $\pi_{\mathsf{POS}}$'s or $\pi_{\mathsf{POST}}$'s into batches and employs the blockchain to ensure unforgeability. Once the blockchain confirms a proof, it becomes tamper-proof. Furthermore, all nodes have consistent views of the blockchain ledger, which ensures that the output of $\mathsf{Rollup.Aggregate}$ is consistent and deterministic. It is worth noting that a $\pi_{\mathsf{POS}}$/$\pi_{\mathsf{POST}}$/$\pi_{\mathsf{ROLL}}$ is of short size due to the succinctness property of zk-SNARK.
\end{proof}

\begin{theorem}[Sybil and Generation Attack Resistance]
Given the assurance of security offered by the public key infrastructure and the zero-knowledge proof system, the probability of Probabilistic Polynomial-Time (PPT) adversaries $\mathcal{A}{\mathsf{sybil}}$ or $\mathcal{A}{\mathsf{gen}}$, achieving success is negligible. 
\label{theorem:sybil}
\end{theorem}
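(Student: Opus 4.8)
The plan is to argue by reduction. I would suppose that some PPT adversary $\mathcal{A}\in\{\mathcal{A}_{\mathsf{sybil}},\mathcal{A}_{\mathsf{gen}}\}$ wins its respective game with non-negligible probability $\varepsilon(\lambda)$, and build a PPT algorithm $\mathcal{B}$ that, using $\mathcal{A}$ as a black box, either (i) inverts RSA on a random instance / breaks the semantic security (and key-privacy) of the PRE scheme --- i.e., defeats the public-key-infrastructure assumption --- or (ii) contradicts the soundness of the zk-SNARK, which by Theorem~\ref{theorem:unforgeability} already breaks unforgeability of $\pi_{\mathsf{POS}}$. Since $\pi_{\mathsf{POST}}$ and $\pi_{\mathsf{ROLL}}$ are, respectively, a recursive composition of and a rollup over $\pi_{\mathsf{POS}}$ instances, it suffices to reason about a single $\pi_{\mathsf{POS}}$ and then lift via Theorem~\ref{theorem:unforgeability}.

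First I would dispose of the \emph{Generation attack}. By definition, $\mathcal{A}_{\mathsf{gen}}$ keeps a state far shorter than $|\mathsf{F}_E|$ yet must, on a freshly sampled challenge $c$, return an accepting $\pi_{\mathsf{POS}}$. Feeding each such proof to the knowledge extractor $\mathcal{E}$ of the zk-SNARK (Eq.~\eqref{eq:soundness}) yields, except with negligible probability, a valid Merkle path $\tau_c$ from the $c$-th $256$-byte leaf to the committed root $\mathsf{rt}$. Repeating this over $O(\lambda)$ independent challenges and applying a standard extractor/rewinding (proof-of-retrievability-style) argument reconstructs all leaves, hence the whole ciphertext $\mathsf{F}_E$, from $\mathcal{A}_{\mathsf{gen}}$'s short state plus its regeneration program; thus that state-plus-program is a succinct description of $\mathsf{F}_E$. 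But $\mathsf{F}_E=\mathsf{RSA.Enc}(\sk_j,\mathsf{F}[i])$ (Plan~A) or $\mathsf{PRE.Enc}(\pk_j,\mathsf{F}[i])$ (Plan~B), and the storage miner never holds $\sk_j$ (Plan~A) nor the encryption randomness (Plan~B); so producing $\mathsf{F}_E$ from a short seed forces $\mathcal{B}$ either to recover the secret key or to compute the RSA private-key transform / break PRE semantic security unaided, each contradicting the PKI assumption. Hence $\varepsilon(\lambda)\le\mathsf{negl}(\lambda)$ for $\mathcal{A}_{\mathsf{gen}}$.

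Next I would handle the \emph{Sybil attack}. Let $\mathcal{A}_{\mathsf{sybil}}$ control identities $\{P_0,\dots,P_n\}$, claim $m$ replicas of $\mathsf{F}$, and store only $m'<m$. Two independent barriers apply. (a) To even be asked for the $m$ proofs, the adversary's identities must be chosen by $\mathsf{RandSelect}$; since the selection probabilities $P_{\mathsf{SM}_i}$ are normalized by consensus power and deal recency, splitting power across Sybil identities does not increase the aggregate chance of capturing all $m$ deals, which is bounded by a product of per-deal probabilities and is small for moderately large $m$. (b) Conditioned on that event, each of the $m-m'$ un-stored replicas must still be answered; by the extraction argument above, answering replica $j$ implies $\mathcal{A}_{\mathsf{sybil}}$ can reproduce $\mathsf{F}_E^{j}$. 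But every replica uses a distinct key, so knowing the other stored replicas, all public keys, and even the plaintext $\mathsf{F}[i]$ (relevant only in Plan~A) does not yield $\mathsf{F}_E^{j}=\mathsf{RSA.Enc}(\sk_j,\mathsf{F}[i])$ without $\sk_j$ --- precisely the RSA one-wayness instance $\mathcal{B}$ embeds --- and analogously for Plan~B it would contradict PRE key-privacy/semantic security. A union bound over the $\le m$ replicas and the polynomially many challenges keeps the total advantage negligible.

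\textbf{Main obstacle.} The crux is the incompressibility/extraction step: rigorously turning ``answers random challenges with an accepting proof'' into ``effectively stores every leaf,'' while (1) controlling the polynomial blow-up in $\mathcal{B}$'s running time from repeated invocations of the zk-SNARK extractor, (2) amplifying the per-invocation non-negligible extraction guarantee of Eq.~\eqref{eq:soundness} into reconstruction of the full replica with overwhelming probability, and (3) making formal the claim that a PRE ciphertext (or an RSA private-key transform of a public plaintext, in Plan~A) admits no short description usable by a PPT algorithm --- i.e., cleanly reducing ``no short description'' to pseudorandomness/one-wayness under the PKI assumption. A secondary subtlety is stating the Sybil bound in~(a) precisely, since it depends on the client's weight parameter $w$ and on how many of the $m$ deals a rational adversary must actually win for the attack to be profitable.
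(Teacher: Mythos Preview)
Your proposal is sound in outline and in fact more cryptographically principled than the paper's own argument, but the route is genuinely different.

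For the Generation attack, the paper does \emph{not} invoke the zk-SNARK knowledge extractor or any incompressibility reduction. It gives a direct counting argument on the Merkle tree: with $N$ leaves there are (the paper asserts) $O(2^{\sqrt{N}})$ root-to-leaf paths, so an adversary storing nothing hits the challenged path with probability $O(1/2^{\sqrt{N}})$; storing a constant number of paths keeps this negligible, while storing enough paths to win (covering $O(\sqrt{N})$ or $O(N)$ tree nodes) already costs space comparable to the replica, defeating the attacker's purpose. No appeal to RSA/PRE hardness appears in this half of the proof. Your extraction-then-incompressibility route buys a cleaner statement tied to standard assumptions and avoids the paper's somewhat loose path counting; the paper's route is much shorter and sidesteps exactly the obstacles you flag (polynomial blow-up from repeated extraction, and formalizing ``a ciphertext has no short PPT-usable description'').

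For the Sybil attack, the two arguments are closer. The paper splits into the same two cases you anticipate: either $\mathcal{A}_{\mathsf{sybil}}$ regenerates the missing replicas on demand---ruled out because regeneration would require the client's private key, hence break RSA/PRE---or $\mathcal{A}_{\mathsf{sybil}}$ forges the proofs outright, dismissed via Theorem~\ref{theorem:unforgeability}. The paper then adds a one-line $\mathsf{RandSelect}$ bound, stating that honest miners are chosen with probability at least $1-m/n$. Your version reaches the same endpoint but routes case~(b) through the extractor first; your barrier~(a) is more careful than the paper's, which does not analyze the weight $w$ or profitability at all.
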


\begin{proof}
We first investigate the security of FileDES against Generation attacks. Generation attackers aim to replicate a file using a small seed or a program. This seed or program is much smaller than the actual file in size.
Recall that each leaf of a Merkle tree is a data chunk of 256 bytes. When a challenge is received, the Generation attacker $\mathcal{A}_{\mathsf{gen}}$ needs to provide the path from a leaf to the Merkle root. Since there are $O(2^h)=O(2^{\sqrt{N}})$ possible paths, where $N$ is the number of leaves, the probability of hitting a valid path is $O(\frac{1}{2^{\sqrt{N}}})$, which is negligible.
If $\mathcal{A}_{\mathsf{gen}}$ stores a constant number of paths to deceive the challenger, the probability of success remains negligible. However, if $\mathcal{A}_{\mathsf{gen}}$ stores a sufficient number of paths, e.g., covering $O(\sqrt{N})$ or $O(N)$nodes on the Merkle tree, to increase its winning probability, it will suffer significant storage overhead, which violates its original intention of saving space via a small-sized seed or program.

A Sybil attack $\mathcal{A}_{\mathsf{sybil}}$  can cheat in two ways. (1) $\mathcal{A}_{\mathsf{sybil}}$ only stores $m'<m$ ($m', m \in \mathbb{Z}$) replicas to save space. Once requested to provide a PoSt, $\mathcal{A}$ reproduces $m$ replicas and then creates proofs based on them. After accomplishing the proof procedure, $\mathcal{A}$ deletes $(m-m')$ replicas and stores only $m'<m$ replicas. (2) $\mathcal{A}_{\mathsf{sybil}}$ stores $m'<m$ replicas; and once queried, $\mathcal{A}$ forges a PoSt and cheats the verifiers. 
However, the first case is infeasible since $\mathcal{A}$ would need to decrypt the encrypted file in order to generate a proof, which would require the client's private key. Breaking the security of RSA or PRE is computationally hard under the assumption that they are secure against a PPT attacker. The success probability in the second case is also negligible, according to Theorem~\ref{theorem:unforgeability}, which prohibits $\mathcal{A}$ from forging proofs. To mitigate Sybil attacks, a random selection mechanism is introduced, in which the client selects a subset of $m$ storage miners from $n$ available ones. $\mathcal{A}$ could create fake nodes, but the probability of $m$ storage miners being selected from the true miner pool is high, which is at least $1-\frac{m}{n}$.
\end{proof}

\subsection{Consistency of FileDES}
Ensuring data consistency is a critical aspect for DSNs, as it guarantees that all storage miners have the same view on the stored files. FileDES achieves this objective, as proved by the following theorem.

\begin{theorem}[Consistency]
If an honest node proclaims a version of a file as stable, other honest nodes, if queried, either report the same result or report error messages. Here ``stable'' means that the file version is stored by FileDES and permanently recorded on the blockchain. 
\label{theorem:consistency}
\end{theorem}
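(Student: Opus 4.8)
The plan is to reduce consistency to the agreement and total-order guarantees of the underlying (modified) DAG-Rider consensus, combined with the tamper-proofness established in Theorem~\ref{theorem:unforgeability}. First I would make precise what it means for an honest node to proclaim a version stable: the node has observed, in its local copy of the blockchain ledger, a committed block (equivalently, a committed position in the totally ordered log) containing the record of that file version --- namely the associated $\overrightarrow{\mathsf{CID}}$, the storage deal, and an aggregated proof $\pi_{\mathsf{ROLL}}$ that passed $\mathsf{PoES.Verify}$ when the confirming smart-contract transaction was executed. Because the adversary controls fewer than $1/3$ of the nodes, the modified DAG-Rider protocol is a Byzantine atomic-broadcast protocol and therefore provides: (i) agreement --- if one honest node commits a block $b$ at log position $k$, every honest node that commits a block at position $k$ commits $b$; and (ii) total order --- honest nodes' committed logs are consistent prefixes of a single common log. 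These two properties are the backbone of the argument.

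Next I would argue the ``no conflicting answer'' part. Suppose an honest node $P$ proclaims version $v$ of file $\mathsf{F}$ stable, witnessed by a committed block $b$ at position $k$. If another honest node $P'$, when queried, were to report a different version $v' \neq v$ of $\mathsf{F}$ as stable, it would be witnessed by some committed block $b'$ at position $k'$ in $P'$'s log. By total order the two logs are consistent prefixes of a common log; by agreement, once $P'$ has advanced past position $k$ it also holds $b$. The file-indexing layer resolves the ``current stable version'' deterministically from the committed prefix (the increments $\mathsf{F}[0],\mathsf{F}[1],\dots$ are ordered by their positions on the DAG ledger), so $P'$ applied to any committed prefix that contains $b$ must return $v$, a contradiction. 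Hence $P'$ cannot return a version different from $v$; at worst it has not yet advanced to position $k$ in the common log, in which case it truthfully returns an error message (``version not found''/``not yet stable''), which is exactly the escape hatch allowed by the statement.

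I would then close the loop by invoking Theorem~\ref{theorem:unforgeability} and the determinism of smart-contract execution: once $b$ is confirmed, $\pi_{\mathsf{ROLL}}$ (and the underlying $\pi_{\mathsf{POST}}$'s) are tamper-proof, so no honest node can later observe a mutated record for version $v$; and since $\mathsf{PoES.Verify}$ is a deterministic function of $(\pi_{\mathsf{ROLL}}, \mathsf{rt}, c)$, all of which are recorded on chain, every honest node that executes the confirming transaction reaches the same accept/reject verdict. Combining this with agreement and total order yields that every honest node, when queried about $v$, either reports $v$ as stable or reports an error, which is the claim.

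The main obstacle I anticipate is being careful about asynchrony and the interface between the consensus layer and the application (file-indexing) layer: I must ensure that ``stable'' is defined purely as a deterministic function of a committed prefix of the log --- so that two honest nodes with the same prefix necessarily agree --- and that the only source of disagreement is one node lagging behind the other, which is covered by the permitted error message. I also need to confirm that $\mathsf{PoES.Setup}$ and $\mathsf{Rollup}$ never record two conflicting increments for the same version, so that the indexing function is well defined on every committed prefix; this is a syntactic check on Algorithms~\ref{alg:PoES} and \ref{alg:Mul_Verify} rather than a deep argument.
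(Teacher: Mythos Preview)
Your proposal is correct and follows essentially the same approach as the paper: invoke Theorem~\ref{theorem:unforgeability} to certify that on-chain proofs are authentic and tamper-proof, then appeal to the atomic-broadcast guarantees (agreement and total order) of the underlying DAG-Rider consensus to conclude that honest nodes share a consistent view of the committed increments and proofs. Your treatment is in fact more careful than the paper's, explicitly spelling out the ``lagging node returns an error'' case and the determinism requirement on the file-indexing layer, both of which the paper leaves implicit.
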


\begin{proof}
A file version can be verified via the on-chain proofs stored in the blockchain,
which has been proven by Theorem~\ref{theorem:sybil} and Theorem~\ref{theorem:unforgeability} to be resilient against adversaries. Consequently, all on-chain proofs are deemed authentic. However, as the system is decentralized, inconsistencies in the proofs may occur. For example, different clients may observe various states of the same file simultaneously, resulting in significant issues when retrieving the file. To address this concern, FileDES employs DAG-Rider, a robust and efficient consensus algorithm that can tolerate byzantine faults. The DAG-Rider constitutes an atomic broadcast algorithm possessing the properties of agreement, integrity, validity, and total order. The first three properties guarantees a dependable broadcast, ensuring that all processes in a distributed system receive the same group of messages. Thus, any two nodes in the DSN attain consensus on the same set of increments and proofs. Furthermore, the total order attribute guarantees that any two increments (along with their corresponding proofs) are comparable, generating a deterministic and consistent causal history of the file updates.
\end{proof}


\section{Evaluation}
\subsection{Implementation}
The section introduces the development of FileDES, which incorporates innovative client and miner modules, along with the enhanced public service modules from Filecoin. The architectural layout of FileDES is illustrated in Figure~\ref{diagram}, in which modules shaded in gray represent the adapted ones from Filecoin, modules shaded in blue refer to those modified or improved from Filecoin, and modules shaded in green denote newly developed ones.

\begin{figure}[!htbp] 
	\centering 
	\includegraphics[width=0.45\textwidth]{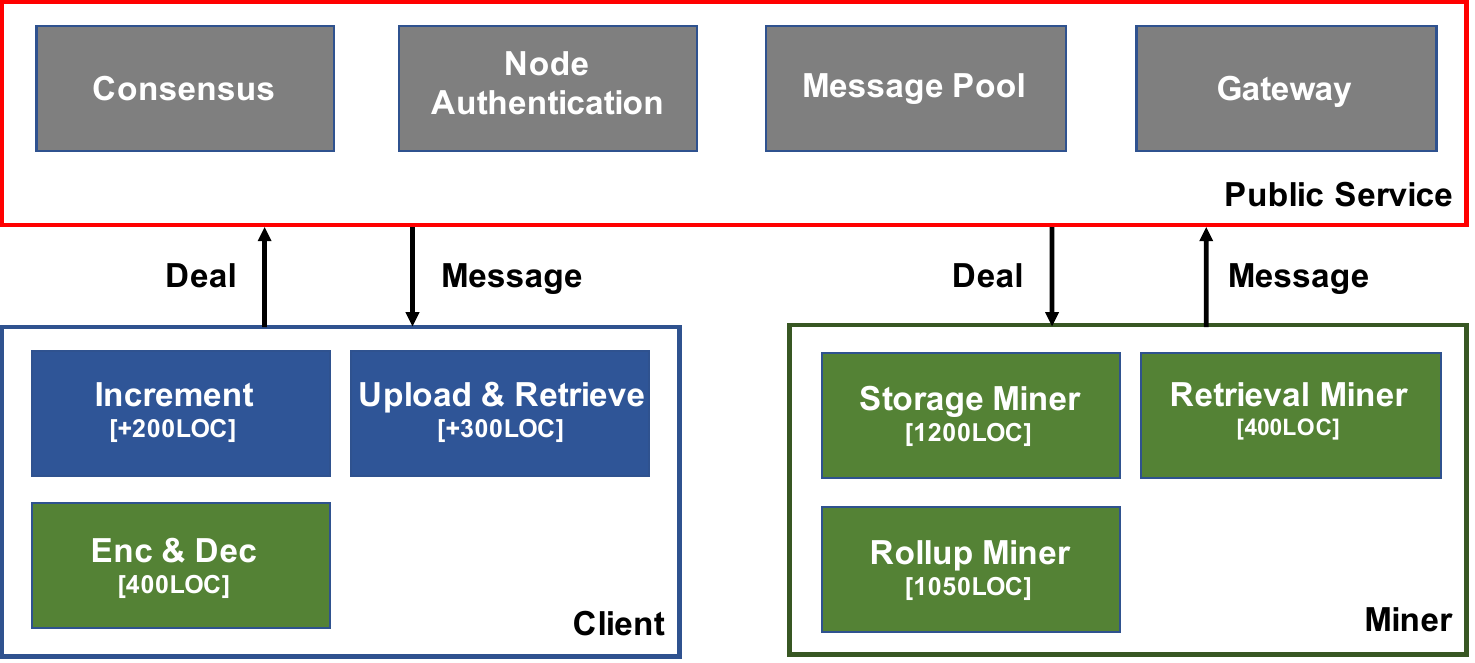} 
	\caption{The system architecture of FileDES} 
	\label{diagram}
\end{figure}

\subsection{Experiment Setup}
Our experimental study is comprised of two main segments. In the first segment, we establish a DELL PowerEdge R740 server operating on Ubuntu 22.04 LTS. The server is equipped with two 12-Core CPUs, 16GB memory, and 300GB SSD. We deploy four DSNs, namely FileDES, Filecoin, Storj, and Sia, on the server. 
To make a fair comparison, we let each DSN system prove on all sectors of a file. We exclude FileDAG and Swarm mentioned in TABLE~\ref{table:features} from our evaluation because FileDAG adopts the proof system of Filecoin while that of Swarm is not open-sourced. Our dataset consists of text files (.txt) and binary files (Android .apk) from various real-world projects, including git, go-ipfs, Minecraft, and Netflix. We observe analogous conclusions that are irrespective of the file types. Therefore, we present typical outcomes concerning text files to better elaborate on the performance of FileDES. We  also incorporate supplementary test data in the appendix for further scrutiny. This segment of study involves three tests:
\begin{itemize}
    \item the proof generation time with different file sizes;
    \item the storage cost with different size and varied number of total versions; 
    \item the proof generation and verification time with variable number of total file versions.
\end{itemize}

In the second segment, we deploy FileDES, Filecoin, Sia, and Storj in a Wide Area Network (WAN) consisting of 120 ecs.t5-lc1m4.large instances. Each instance is equipped with a 2-Core CPU, 4GB memory, and 40GB SSD, and is configured with Ubuntu 22.04 LTS. The bandwidth capacity of each instance is 100Mbps, and a single node is established on each instance. Out of the 120 instances, 100 instances are designated as storage miners while the remaining 20 instances are clients. The evaluation criteria for this phase includes the following:
\begin{itemize}
\item the throughput of proof generation as the number of clients increases; 
\item the correlation between latency and throughput.
\end{itemize}

\subsection{Performance} 

\textbf{Proof Generation Time.} 
The processes of generating PoS and PoSt were examined across various file sizes. As depicted in Figure~\ref{PoS generation}, the PoS generation times in FileDES, Sia, and Storj increase with file size, while Filecoin exhibits stability. This can be attributed to the changeable sector size in FileDES, where files are padded to different sizes to create a balanced Merkle tree. The PoS generation times in Sia and Storj increase linearly due to the increasing number of proofs required with the increase of the file size. Filecoin and Swarm, on the other hand, have a fixed sector size of 8MB, which necessitates padding files with random data to obtain the required size. Our results indicate that the proof generation time for FileDES is shorter than those of Filecoin and Storj, and close to that of Sia. However, Sia's security strength is weaker compared to that of FileDES due to its implementation of the Merkle tree with 64-byte leaves, which is larger than the one used in FileDES. This results in a reduced number of leaves generated by Sia. Our analysis, as presented in Section~\ref{sec:analysis}, reveals that the security level is determined by $O(2^{\sqrt{N}})$, with $N$ representing the number of leaves.
Figure~\ref{PoSt generation} shows that the PoSt generation time in FileDES is shorter than those in Filecoin and Storj, and close to that of Sia. Since PoSt in Filecoin does not involve a sealing process, its generation is faster than PoS. The latencies of FileDES and Sia are close because their PoSt processes are similar in generating the Merkle paths on the already-processed file sectors. 

To evaluate the encryption and decryption processes, the two encryption options were tested. RSA-based encryption takes approximately 7.5 seconds to encrypt 1MB of data, while PRE-based encryption takes approximately 5.8 seconds. RSA-based encryption takes 0.35 seconds to decrypt 1MB of data, while PRE-based encryption takes about 3.7 seconds for decryption. The time cost for PRE-based encryption to generate a re-encryption key is approximately 0.11 seconds. Although the encryption process takes time, it is still reasonable since it eases the computational burden on the storage miners by allocating the task to end-users. Consequently, the performance of the proof system is enhanced compared to that of the conventional methods.

\begin{figure}[!t]
\centering
\subfigure[PoS]{
\label{PoS generation}
\includegraphics[width=0.48\linewidth]{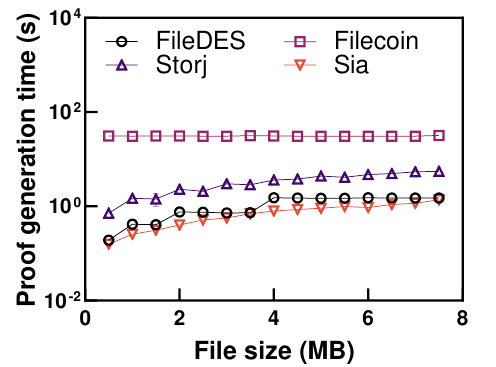}}
\subfigure[PoSt]{
\label{PoSt generation}
\includegraphics[width=0.48\linewidth]{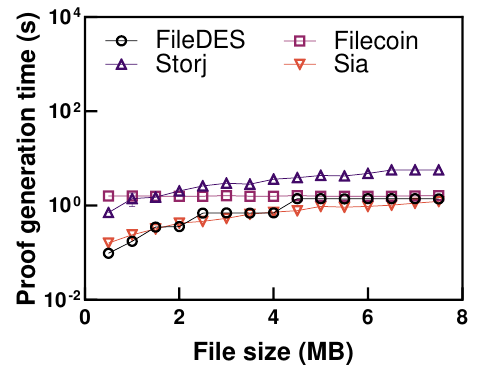}}
\caption{The proof generation time of PoS and PoSt}
\label{storage proof}
\end{figure}

\begin{figure}[!t]
\centering
\subfigure[Files of various sizes]{
\label{storage cost1}
\includegraphics[width=0.48\linewidth]{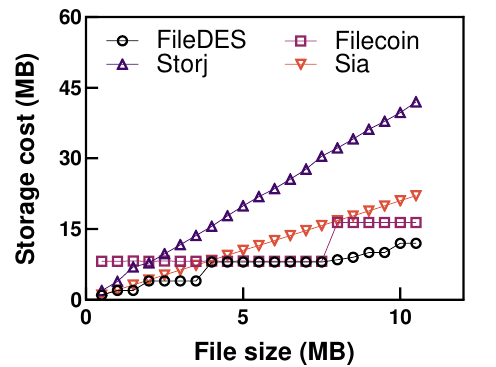}}
\subfigure[Multiple versions of a file]{
\label{storage cost2}
\includegraphics[width=0.48\linewidth]{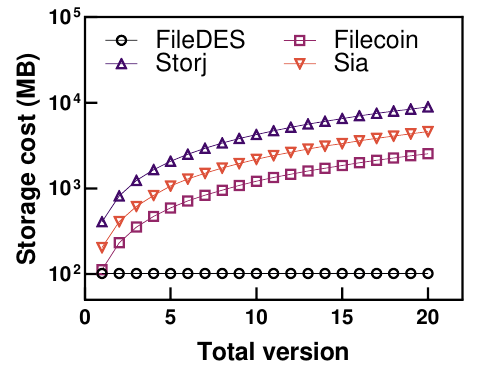}}
\caption{The storage cost of storage miners}
\label{storage cost}
\end{figure}

\textbf{Storage Cost.} 
An assessment on the storage cost, or real disk usage, for storing files of varying sizes was conducted across four systems. The results of the evaluation, presented in Figure~\ref{storage cost1}, indicate that FileDES has the lowest storage cost among the four DSNs. Interestingly, the storage cost of Filecoin is the same as that of FileDES when the file sizes are in the range of [4,7.5] MB as both systems padded files to 8MB. Storage costs in Sia and Storj increase linearly with the file size due to the use of erasure code to add redundancy to a file, resulting in an actual size of about 2.2 and 4 times of the original file size, respectively. 
Furthermore, the results shown in Figure~\ref{storage cost2} reveal that the storage cost in FileDES undergoes only a minor increase with the increasing number of versions due to the storage of only file increments whenever a multi-version file is updated to a new version. 

\textbf{PoSt Generation and Verification (Multi-version Files).} Based on Figure~\ref{multi gen2}, it is evident that FileDES has the fastest PoSt generation time compared to Filecoin, Sia, and Storj. The efficiency of FileDES can be attributed to its fast proof generation process and optimized storage of file increments. The $\mathsf{Rollup}$ function is responsible for consolidating the PoSts of each version into a single proof of constant size using a zk-SNARK circuit. However, it is crucial to limit the size of the zk-SNARK circuit to avoid overburdening the memory usage and CPU with small inputs. To tackle this problem, a limit on the number of increments used to recover a file can be set, beyond which a new base is created, thereby the maximum number of proofs to be aggregated is restricted. Figure~\ref{multi verify} depicts the total PoSt proof verification time, which remains constant at 4.5ms for FileDES as we create a succinct proof to aggregate PoSts, requiring only the verification of a single succinct proof. However, the total verification times of Filecoin, Sia, and Storj increase linearly with the total number of versions.

\begin{figure}[!t]
\centering
\subfigure[PoSt generation]{
\label{multi gen2}
\includegraphics[width=0.48\linewidth]{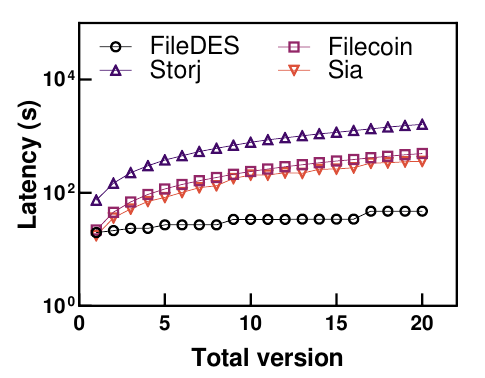}}
\subfigure[PoSt verification]{
\label{multi verify}
\includegraphics[width=0.48\linewidth]{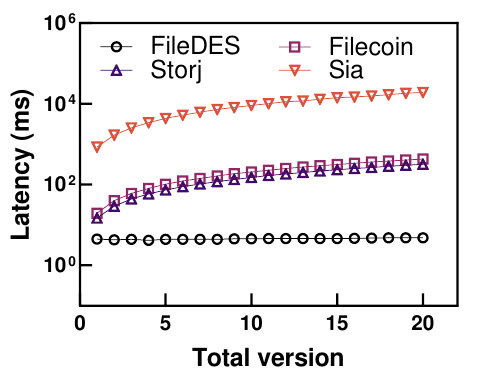}}
\caption{The latency of generating and verifying PoSts for multi-version files}
\label{multi-version file proof}
\end{figure}

\textbf{Throughput and Latency in WAN.} As far as our understanding goes, this study is the initial attempt to carry out a thorough comparative examination on the most advanced DSNs in an actual WAN environment. Particularly, this evaluation was conducted to obtain insights into the current state-of-the-art DSNs. In our experimental study, we manipulated the number of clients to send requests at random intervals, where each client dispatched 20 requests every 5 seconds. The network was consisted of 100 storage miners and up to 20 clients. The size of each uploaded file was fixed at 5MB. Our primary objective was to compare the throughputs of PoS and PoSt generation in four different Decentralized Storage Networks. Our results indicate that FileDES outperforms the other three in terms of PoS and PoSt throughputs (refer to Figure~\ref{multi throughtput}). Specifically, the throughput of FileDES is 3.02 and 1.79 times higher than that of  Filecoin in PoS and PoSt, respectively. Hence, one can infer that FileDES exhibits better scalability than the other three DSNs. The latency-throughput graph of the four DSNs is depicted in Figure~\ref{multi latency}, which reveals that FileDES consistently achieves superior performance compared to the other three DSNs under various settings. Furthermore, the latency of FileDES increases only slightly with the throughput for both PoS and PoSt generations.

\begin{figure}[!t]
\centering
\subfigure[PoS]{
\label{PoS throughput}
\includegraphics[width=0.48\linewidth]{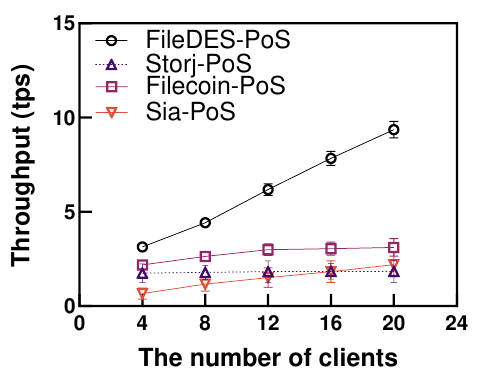}}
\subfigure[PoSt]{
\label{PoSt throughput}
\includegraphics[width=0.48\linewidth]{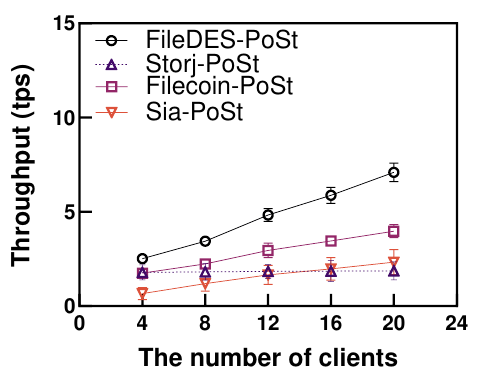}}
\caption{The throughput of PoS and PoSt with the number of clients.}
\label{multi throughtput}
\end{figure}

\begin{figure}[!t]
\centering
\subfigure[PoS]{
\label{PoS latency}
\includegraphics[width=0.48\linewidth]{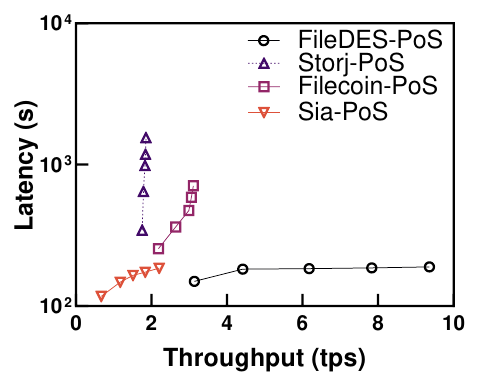}}
\subfigure[PoSt]{
\label{PoSt latency}
\includegraphics[width=0.48\linewidth]{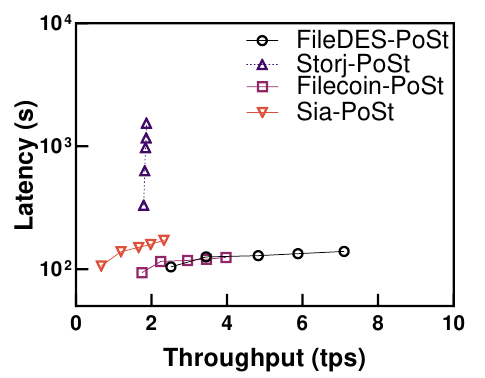}}
\caption{The latency-throughput of PoS and PoSt.}
\label{multi latency}
\end{figure}

\section{Conclusion}
\label{sec:conclusion}
This study introduces FileDES, a novel protocol that integrates three key elements: privacy preservation, scalable storage proof, and batch proof verification, for decentralized storage. The proposed protocol aims to address the exiting challenges faced by pioneers of DSN, such as data privacy leakage, costly storage proof, and low efficiency of recurrent proof verification. FileDES outperforms the state-of-the-arts in several aspects, including the proof generation/verification efficiency, storage cost, and scalability. 

\section{Acknowledgement}
This study was partially supported by the National Key R\&D Program of China (No.2022YFB4501000), the National Natural Science Foundation of China (No.62232010, 62302266), Shandong Science Fund for Excellent Young Scholars (No.2023HWYQ-008), Shandong Science Fund for Key Fundamental Research Project (ZR2022ZD02), and the Fundamental Research Funds for the Central Universities.



\bibliographystyle{IEEEtran}
\bibliography{references}

\end{document}